\newcommand{\Q}{\mathsf{Q}}
\Crefname{lemma}{Lemma}{Lemmas}
\Crefname{fact}{Fact}{Facts}
\Crefname{theorem}{Theorem}{Theorems}
\Crefname{corollary}{Corollary}{Corollaries}
\Crefname{claim}{Claim}{Claims}
\Crefname{example}{Example}{Examples}
\Crefname{problem}{Problem}{Problems}
\Crefname{definition}{Definition}{Definitions}
\Crefname{notation}{Notation}{Notations}
\Crefname{assumption}{Assumption}{Assumptions}
\Crefname{subsection}{Subsection}{Subsections}
\Crefname{section}{Section}{Sections}
\Crefname{figure}{Figure}{Figures}
\newtheorem{theorem}{Theorem}
\newtheorem{definition}{Definition}
\newtheorem{lemma}[theorem]{Lemma}
\newtheorem{fact}{Fact}
\newtheorem{proposition}[theorem]{Proposition}
\def\E{\@ifnextchar[{\@withb}{\@withoutb}}
\def\@withb[#1]{\mathop{\mathbb{E}}_{#1}\brk}
\def\@withoutb{\mathop{\mathbb{E}}\brk}
\def\Pr{\@ifnextchar[{\@witha}{\@withouta}}
\def\@witha[#1]{\mathop{\operator@font Pr}_{#1}\brk}
\def\@withouta{\mathop{\operator@font Pr}\brk}
\title{Maximum Separation of Quantum Communication Complexity \\ With and Without Shared Entanglement}
\author[1]{Atsuya Hasegawa\thanks{atsuya.hasegawa@math.nagoya-u.ac.jp}}
\author[1]{Fran{\c{c}}ois Le Gall\thanks{legall@math.nagoya-u.ac.jp}}
\author[2]{Augusto Modanese\thanks{augusto.modanese@cispa.de}}
\affil[1]{\textit{Graduate School of Mathematics, Nagoya University, Japan}}
\affil[2]{\textit{CISPA Helmholtz Center for Information Security, Germany}}
\date{}
\begin{document}

\maketitle

\begin{abstract}
    We present relation problems whose input size is $n$ such that they can be solved with no communication for entanglement-assisted quantum communication models, but require $\Omega(n)$ qubit communication for $2$-way quantum communication models without prior shared entanglement. This is the maximum separation of quantum communication complexity with and without shared entanglement. To our knowledge, our result even shows the first lower bound on quantum communication complexity without shared entanglement when the upper bound of entanglement-assisted quantum communication models is zero. Our result refutes a quantum analog of Newman’s theorem.
    
    The problem we consider is parallel repetition of any non-local game which has a perfect quantum strategy and no perfect classical strategy, and for which a parallel repetition theorem holds with exponential decay.
\end{abstract}

\section{Introduction}

\subsection*{Background}

The power of shared entanglement has been a central attention in the field of quantum information processing. Especially it has been investigated from the perspectives of quantum non-locality and non-local games \cite{bell1964einstein,clauser1969proposed}, and communication complexity \cite{yao1979some,yao1993quantum}. A non-local game is a game in which two cooperating players, Alice and Bob, each receives a question from a referee, and then respond with an answer. The referee randomly selects the questions according to a known distribution, and, upon receiving answers from Alice and Bob, decides whether they win or lose. In a quantum setting, Alice and Bob are allowed to share entanglement before receiving questions. In communication complexity, we consider the setting where inputs are distributed to Alice and Bob, and they coordinate to solve some problems depending on distributed inputs. The goal is to identify the minimum amount of communication between them. In quantum communication complexity, we also consider shared entanglement to reduce communication as an analog of shared randomness in classical communication complexity \cite{cleve1997substituting}. We refer to \cite{BBT05,BCMdW10} for surveys of these concepts and their relations.

\subsection*{Our results}

In this paper, we show there exist relation problems which can be solved with no communication for entanglement-assisted quantum communication models and hence non-signaling communication models, and require $\Omega(n)$ qubits communication for $2$-way quantum communication models without prior entanglement. The communication task we consider is parallel repetition of a non-local game which has a perfect quantum strategy and no perfect classical strategy, and for which a parallel repetition theorem \cite{Raz98,Hol09} holds with exponential decay. Here, a perfect quantum (resp. classical) strategy means a quantum (resp. classical) strategy for non-local games which wins with probability 1, and the parallel repetition theorem with exponential decay states that for a two-prover game with value smaller than 1, parallel repetition reduces the value of the game in an exponential rate. The parallel repetition of the magic square games \cite{Mer90,Mer93,Ara02} is the canonical example of our communication tasks. Let us denote by $\Q_2^{pub}(R)$ the quantum 2-way communication complexity without shared entanglement and with shared randomness (or public coin) for a relation $R$, and by $\Q^{*}(R)$ the quantum communication complexity with shared entanglement for $R$. Here is our result.

\begin{restatable}{theorem}{restateMainThm}\label{thm:main}
    There exists a relation $R$ such that $\Q^{*}(R)=0$ and $\Q_2^{pub}(R)=\Theta(n)$.
\end{restatable}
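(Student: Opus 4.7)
I will instantiate $R$ as the relation corresponding to the $n$-fold parallel repetition of a fixed non-local game $G$ meeting the hypotheses of the theorem (the magic square game being the canonical instance). Concretely, Alice receives $x = (x_1, \ldots, x_n) \in X^n$, Bob receives $y = (y_1, \ldots, y_n) \in Y^n$, and an output pair $(a, b) = ((a_i)_{i=1}^n, (b_i)_{i=1}^n)$ is valid iff $(x_i, y_i, a_i, b_i)$ is a winning configuration of $G$ for every $i \in [n]$. The total input length is $\Theta(n)$, matching the statement of the theorem.

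The entanglement-assisted upper bound $\Q^{*}(R) = 0$ is immediate: Alice and Bob pre-share $n$ independent copies of the entangled state used by $G$'s perfect quantum strategy and apply the strategy coordinate by coordinate, yielding a valid tuple with probability $1$ and no communication. The trivial upper bound $\Q_2^{pub}(R) = O(n)$ follows because Alice can send Bob one half of each of $n$ freshly prepared EPR pairs using $n$ qubits of communication, which reduces the task back to the $\Q^{*}$ setting.

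For the lower bound $\Q_2^{pub}(R) = \Omega(n)$, I would argue by contradiction: suppose $\Pi$ uses $c = o(n)$ qubits of communication, shared randomness, and no prior entanglement, and succeeds on every input with probability at least $2/3$. Reinterpreted as an (unusually resource-endowed) strategy for the $n$-fold parallel repetition of $G$, the plan is to establish a \emph{communication-to-value} lemma: for any non-local game $H$, any $c$-qubit, entanglement-free protocol (with shared randomness) wins $H$ with probability at most $2^{O(c)} \cdot \omega_c(H)$, where $\omega_c(H)$ denotes the classical value of $H$. Applied to $H = G^n$ and combined with the parallel repetition hypothesis $\omega_c(G^n) \le \alpha^n$ for some fixed $\alpha < 1$, this gives $2/3 \le 2^{O(c)} \alpha^n$, which forces $c = \Omega(n)$.

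The main obstacle is establishing the communication-to-value lemma in the quantum setting. For classical communication, a standard transcript-conditioning argument works: the transcript has support of size at most $2^c$, and conditioning on each value reduces the protocol to a zero-communication strategy whose winning probability is at most $\omega_c(H)$. The quantum case is more delicate, since one cannot measure intermediate messages without disturbing the protocol. My approach would be to leverage the absence of prior entanglement to bound the Schmidt rank of the protocol's joint state across the Alice--Bob cut by $2^{c}$ throughout (each transmitted qubit can at most double the rank), and then to express the induced distribution on outputs as a mixture of at most $2^{O(c)}$ product distributions---effectively classical branches of the protocol---whose best achievable winning probability is controlled by $\omega_c(H)$. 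That this argument collapses when arbitrary prior entanglement is allowed is exactly what makes the separation $\Q^{*}(R) = 0$ vs.\ $\Q_2^{pub}(R) = \Omega(n)$ possible.
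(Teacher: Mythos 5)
Your high-level plan coincides with the paper's: instantiate $R$ as $G^{\otimes n}$ for the magic square game, get $\Q^*(R)=0$ from the perfect quantum strategy, and deduce the lower bound by converting a low-communication protocol into a zero-communication classical strategy whose winning probability is then crushed by the parallel repetition theorem. Where you diverge is in the crucial ``communication-to-value'' lemma, and there the proposed mechanism does not work. You claim that because the entanglement-free protocol's joint state has Schmidt rank at most $2^c$ across the Alice--Bob cut, the induced output distribution is a mixture of at most $2^{O(c)}$ \emph{product} distributions. This is false: a state with small Schmidt rank can still generate a non-local distribution when measured, because measuring a superposition $\sum_k \lambda_k \ket{u_k}\ket{v_k}$ involves cross terms and is not the same as measuring the mixture $\sum_k |\lambda_k|^2 \ket{u_k}\bra{u_k}\otimes\ket{v_k}\bra{v_k}$. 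The standard counterexample is a single EPR pair (Schmidt rank $2$), whose measurement statistics violate a Bell inequality and therefore are not a convex combination of \emph{any} number of product distributions. Even the cleaner operator inequality $\ket{\phi}\bra{\phi} \preceq 2^c\,\sigma$ with $\sigma$ the separable state built from the Schmidt decomposition does not rescue the argument, because in a two-way protocol the final state $\ket{\phi^{x,y}}$ (and hence its Schmidt vectors and $\sigma$) depends on \emph{both} inputs, so $\sigma$ is not a legal shared resource for a zero-communication classical strategy.

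The paper's route to the same lemma avoids all of this. It first teleports each of the $c'n$ transmitted qubits using one shared EPR pair and two classical bits, then has both parties guess the resulting $2c'n$-bit classical transcript from shared randomness (abort on mismatch, losing a $2^{-2c'n}$ factor), and finally replaces the $c'n$ EPR pairs by a maximally mixed state, which is separable and contains $(\ket{\mathsf{EPR}}\bra{\mathsf{EPR}})^{\otimes c'n}$ with weight $2^{-2c'n}$ in a convex decomposition, losing another $2^{-2c'n}$ factor. This yields a bona fide zero-communication classical strategy winning with probability at least $p\cdot 2^{-4c'n}$, which is exactly the $2^{O(c)}$ overhead you wanted, but obtained without invoking a Schmidt-rank-to-mixture claim that is provably untrue. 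If you want to keep a Schmidt-rank flavor, you would still essentially need the teleportation-and-guessing detour to turn the quantum messages into shared randomness, so the paper's argument is the natural repair of your sketch.
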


Moreover, in the theorem above, the upper bound holds with zero error and the lower bound holds with $2^{-o(n)}$ error as well as the bounded error. Since there exists a trivial upper bound of $O(n)$ in communication complexity, this is the maximum separation of quantum communication complexity with and without shared entanglement asymptotically.

It is known that similar relations can be constructed from a communication problem which is easy ($O(\poly (\log n))$ communication) for quantum one-way (and thus SMP) communication complexity and hard ($\Omega(\poly (n))$ communication) for classical communication complexity. It can be shown that such relations are solved from shared entanglement of $O(\poly (\log n))$ qubits and no additional communication, and they are also hard for classical communication protocols by a reduction. These constructions from communication complexity to non-local relations are explicitly written in Section 4 of \cite{BCMdW10}. 
More recently, Jain and Kundu \cite{jain2021direct} showed the classical communication complexity of the parallel repetition of magic square games is $\Omega(n)$.
To our knowledge, our result is the first lower bound on quantum communication complexity without prior entanglement when we keep the upper bound of entanglement-assisted quantum communication models zero. See \cref{tab:result} for a comparison between our result and previous results.

\begin{table}[htbp]
    \centering
    \begin{tabular}{cc|c}
         & Relation problem & Lower bound \\
         \hline \hline
         & Non-local DJ \cite{BCT99} & $\Omega(n)$ in $1$-way deterministic model \\ \hline
         & Non-local HM \cite{BBT05,gavinsky2006bounded} & $\Omega(\sqrt{n})$ in $1$-way randomized model \\ \hline
         & \begin{tabular}{c} Non-local relations generated \\ from \cite{Gav21,RGT22,GGJL24} \end{tabular} & $\Omega(\poly (n))$ in $2$-way randomized model \\ \hline
         & \begin{tabular}{c} Parallel Magic Square Games \cite{jain2021direct} \end{tabular} & $\Omega(n)$ in $2$-way randomized model \\ \hline
         & \begin{tabular}{c} Parallel Magic Square Games \\(\cref{thm:main}) \end{tabular} & $\Omega(n)$ in $2$-way \emph{quantum} model \\
    \end{tabular}
    \caption{Relation problems solved by the entanglement-assisted communication model with zero communication. We abbreviate Deutsch-Jozsa as DJ and Hidden Matching as HM.}
    \label{tab:result}
\end{table}

We complement the main result by showing that the largest separation does not hold for a function rather than a relation.

\begin{restatable}{theorem}{restateCompThm}\label{thm:function}
    Let $F: X \times Y \rightarrow Z$ be a function. Suppose $\Q^*(F)=0$. Then $\Q_2(F)=0$.
\end{restatable}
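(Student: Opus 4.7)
The plan is to apply the no-signaling principle. Fix any zero-communication entanglement-assisted protocol computing $F$, and adopt the standard convention that a designated player (say Bob) outputs the value of $F$. In such a protocol, Bob's output is produced by a local operation $B_y$ applied to his input $y$ and to his share of a shared entangled state $\ket{\psi}_{AB}$, followed by a local POVM $\{M_y^z\}_z$. Because no message ever reaches Bob, a direct partial-trace calculation (using completeness of Alice's Kraus operators for her local map on input $x$) shows that Bob's output distribution $D_y$ depends only on $y$ and on $\rho_B=\tr_A(\ket{\psi}\!\bra{\psi})$, and not on $x$ nor on what Alice does with her share.

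Correctness of the protocol then forces $F$ to be independent of $x$. In the zero-error interpretation of $\Q^*(F)=0$ we have $D_y(F(x,y))=1$ for every $x$, so $F(x,y)$ equals the unique atom of $D_y$ and must therefore be a function $g(y)$ of $y$ alone. In the bounded-error interpretation, at most one $z\in Z$ can satisfy $D_y(z)>1/2$, yielding the same conclusion $F(x,y)=g(y)$. Since $g$ is computable from $y$ alone, Bob deterministically outputs $g(y)$ with no entanglement and no communication, giving $\Q_2(F)=0$.

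If instead the convention requires both players to output $F(x,y)$, a symmetric no-signaling argument applied to Alice's marginal additionally forces $F$ to be independent of $y$, making $F$ constant; the conclusion $\Q_2(F)=0$ is then immediate.

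The main obstacle is essentially conceptual: one must pin down the output convention for function protocols and verify that the no-signaling argument survives both the zero-error and bounded-error definitions of $\Q^*$. The underlying quantum-mechanical content is just that the marginal state $\rho_B$ is invariant under any local operation Alice performs on her side of $\ket{\psi}_{AB}$, which is standard and requires no rigidity or game-theoretic machinery.
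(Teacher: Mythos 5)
Your proof is correct and follows essentially the same approach as the paper: both exploit the no-signaling fact that each party's marginal output distribution is determined by their own input alone (since the other party's local operations cannot affect the reduced state), and that bounded-error correctness forces the most probable output to equal $F$, yielding a deterministic zero-communication protocol. You are somewhat more explicit than the paper about the output convention and about the structural consequence that $F$ must in fact be independent of one (or both) of its arguments, but the underlying reasoning is the same.
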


We finally show there exists a relation problem which can be solved without communication using non-signaling correlation, and requires to solve $\Omega(n)$ qubits communication for entanglement-assisted communication models. Let us denote by $\Q^{\mathcal{NS}}(R)$ quantum communication complexity with non-signaling correlation and by $\Q_2^*(R)$ quantum 2-way communication complexity with shared entanglement.

\begin{restatable}{theorem}{restateCHSHThm}\label{thm:chsh}
    There exists a relation $R$ such that $\Q^{\mathcal{NS}}(R)=0$ and $\Q_2^*(R)=\Theta(n)$.
\end{restatable}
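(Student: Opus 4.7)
The plan is to mirror the proof of \cref{thm:main}, but shifted up one level in the hierarchy classical $\subseteq$ quantum $\subseteq$ non-signaling: instead of a non-local game with a perfect quantum strategy and no perfect classical strategy, we take one with a perfect non-signaling strategy and no perfect quantum strategy, and for which a parallel repetition theorem with exponential decay of the entangled value holds. The canonical choice is the CHSH game, whose Popescu-Rohrlich box is a perfect non-signaling strategy and whose optimal entangled value is $\cos^2(\pi/8)<1$; exponential decay of the entangled value under parallel repetition follows from known quantum parallel repetition theorems for games with $\omega^{*}<1$.

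Let $R$ be the relation of the $n$-fold parallel repetition of the chosen game $G$: on a question pair $((x_1,\ldots,x_n),(y_1,\ldots,y_n))$, the valid answer pairs are the tuples $((a_1,\ldots,a_n),(b_1,\ldots,b_n))$ winning $G$ in every coordinate. The upper bound $\Q^{\mathcal{NS}}(R)=0$ is immediate: Alice and Bob share $n$ independent copies of the perfect non-signaling strategy for $G$ (e.g.\ $n$ PR-boxes) and simply output what their boxes return, exchanging no qubits.

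For the lower bound $\Q_2^{*}(R)=\Omega(n)$, we reuse the communication-to-strategy reduction underlying \cref{thm:main}. Any two-way entanglement-assisted quantum protocol for $R$ using $c$ qubits of communication can be converted into a no-communication, entanglement-assisted quantum strategy for $G^{\otimes n}$ by replacing each communicated qubit with a teleportation step on a fresh EPR pair in which the receiver guesses the two classical Pauli correction bits; this succeeds with probability at least $4^{-c}$ times the success probability of the original protocol. By the assumed parallel repetition theorem, the entangled value of $G^{\otimes n}$ is at most $2^{-\Omega(n)}$, so for a protocol with bounded error (or even with error $1-2^{-o(n)}$) we must have $c=\Omega(n)$. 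Together with the trivial $O(n)$ upper bound, this yields $\Q_2^{*}(R)=\Theta(n)$.

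The main obstacle is invoking a quantum parallel repetition theorem with the right parameters: we need exponential decay of the \emph{entangled} value, not merely some decay, and we need it for a game admitting a perfect non-signaling strategy. For CHSH this is within reach of general results such as Yuen's threshold theorem or the anchoring-based framework of Bavarian-Vidick-Yuen, but some care is required to verify the precise hypotheses of whichever theorem is cited and to convert a bias-type bound (natural for XOR games) into the all-coordinates-win formulation used by the relation $R$. The teleportation-with-guessing reduction is identical to the one used in \cref{thm:main} and introduces no new technical difficulty; the novelty here is entirely in the choice of base game and in the fact that the ambient model is now entanglement-assisted quantum rather than bare quantum.
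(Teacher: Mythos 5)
Your upper bound argument (PR boxes for $n$ independent CHSH games) is exactly the paper's. For the lower bound, however, you take a genuinely different route. The paper does not perform the teleportation-and-guess reduction at all for this theorem: it instead invokes \cref{thm:dpt}, the direct product theorem of Jain and Kundu for \emph{entanglement-assisted interactive quantum communication complexity}, which takes a game $G$ with product question distribution and $\omega_q(G) < 1$ and directly bounds the success probability of any EA-quantum protocol for $G^{\otimes n}$ with communication $cn$. Since CHSH has a uniform (hence product) question distribution and quantum value $\frac{2+\sqrt{2}}{4}$, this applies off the shelf, and the whole lower bound is a one-line invocation of that theorem. Your plan instead decomposes into (i) the teleportation reduction turning an EA-quantum communication protocol into an EA no-communication strategy with a $4^{-c}$ success loss, followed by (ii) a quantum parallel repetition theorem with exponential decay of the entangled value. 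Conceptually this is what the Jain--Kundu proof internalizes, so your approach is sound in spirit, but it leaves a real loose end that the paper avoids: you must locate a quantum parallel repetition theorem for the \emph{all-coordinates-win} entangled value of CHSH, and as you yourself flag, the classic CSUU result is phrased as a perfect product theorem for XOR \emph{biases}, not directly for the AND game; one would have to invoke instead one of the anchoring/threshold-type results (Bavarian--Vidick--Yuen, Yuen, Chailloux--Scarpa, or Jain--Pereszl\'enyi--Yao for product-distribution games) and check its hypotheses. Also, be careful with the phrase ``the receiver guesses the Pauli bits'': as in the paper's proof of \cref{thm:lower_bound}, the guess must come from \emph{shared} randomness so that both parties proceed consistently with the same hypothesized transcript, and the sender checks consistency and aborts otherwise; that shared randomness is of course available in an entanglement-assisted model, but it should be stated. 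Net assessment: your approach can work and is instructive (it highlights the uniform ``shift up one level in the correlation hierarchy'' template), but the paper's choice to cite \cref{thm:dpt} is materially simpler precisely because that theorem already merges the reduction and the repetition argument into one statement that is tailored for the EA communication setting.
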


Our results imply that the strength of correlations can completely change the complexity of quantum communication.

\subsection*{Proof strategies}

To prove the lower bound in \cref{thm:main}, we consider a reduction from a 2-way quantum communication protocol to a zero communication protocol and the parallel repetition theorem. The reduction consists of quantum teleportation \cite{BBC+93}, making the parties check that shared randomness meets their communication protocols, and replacing shared entanglement by a maximally mixed state. This reduction or removal of quantum communication makes the success probability exponentially worse in the communication amount. However, if the parallel repetition theorem holds with exponential decay, the success probability is at most an exponentially small value, and thus $\Omega(n)$ qubit communication is required in the original 2-way quantum communication protocol.

We show \cref{thm:function} by considering a simulation of quantum circuits to solve non-local relations without communication. To prove \cref{thm:chsh}, we consider parallel repetition of CHSH games \cite{clauser1969proposed}, which are known that their input-output relation does not violate causality and thus solved by non-signaling correlation \cite{popescu1994quantum}. To prove the lower bound, we apply a direct product theorem for the entanglement-assisted quantum communication complexity shown by Jain and Kundu \cite{jain2021direct}.

\subsection*{Related work}

It is known that $c$-bit classical messages can be guessed without communication using uniformly random bits with probability $2^{-c}$. Laplante, Lerays and Roland (see the proof of Theorem 5 in \cite{laplante2012classical}) showed, under entanglement assistance, quantum message can be also guessed with probability $2^{-2c}$ using quantum teleportation. Our technique further implies quantum message can be guessed with probability $2^{-4c}$ without the entanglement assistance using quantum teleportation and replacing maximally entangled states by maximally mixed states.

Newman's theorem \cite{New91} states that, by allowing constant error penalty, arbitrary shared randomness can be replaced by $O(\log n)$ bits private randomness. A line of work \cite{shi2005tensor,gavinsky2006bounded,Gav06,jain2008optimal,jain2008direct,gavinsky2009classical,AG23} investigated some trade-offs between the amount of shared entanglement and communication. Most recently, Arunachalam and Girish \cite{AG23} showed some fine-grained trade-offs for partial functions (i.e., there is a unique solution for some promised inputs). By a simple reduction, our result implies that there exist total relation problems (i.e., there are multiple solutions for all inputs) such that they can be solved with no communication if shared entanglement of $\Omega(n)$ EPR pairs is allowed, and $\Omega(n)$ quantum 2-way communication is required even if shared entanglement of $o(n)$ EPR pairs is allowed. Therefore, our result exhibits an asymptotically tight trade-off between the amount of entanglement and quantum 2-way communication (a quantum 2-way communication model is a stronger communication model than classical (and quantum) SMP (and 1-way) communication models). However, when we consider function problems which have upper bounds of zero communication in entanglement-assisted models, the problems can be solved without entanglement and communication (\cref{thm:function}). See \cref{tab:result2} for a comparison. Our result and the line of work \cite{shi2005tensor,gavinsky2006bounded,Gav06,jain2008optimal,jain2008direct,gavinsky2009classical,AG23} imply that a quantum analog of Newman's theorem does not hold.

\begin{table}[htbp]
    \centering
    \begin{tabular}{cc|c|c|c|c}
         & Problem & Type & Input size & Upper bound & Lower bound\\
         \hline \hline
         & $\oplus^k$-Forrelation & \begin{tabular}{c} partial \\ function \end{tabular} & $kn$ & \begin{tabular}{c} $\Tilde{O}(k^5 \log^3 n)$ EPR pairs \& \\ $\Tilde{O}(k^5 \log^3 n)$ com. in $\Q^{||}$ \end{tabular} & \begin{tabular}{c} $O(k)$ EPR pairs \& \\ $\Omega(n^\frac{1}{4})$ com. in $\C^2$ \end{tabular} \\ \hline
         & $\oplus^k$-Boolean HM & \begin{tabular}{c} partial \\ function \end{tabular} & $kn$ & \begin{tabular}{c} $\Tilde{O}(k \log n)$ EPR pairs \& \\ $\Tilde{O}(k \log n)$ com. in $\C^{||}$ \end{tabular} & \begin{tabular}{c} $O(k)$ EPR pairs \& \\ $\Omega(k \sqrt{n})$ com. in $\C^{1}$ \end{tabular} \\ \hline 
         & " & " & " & " & \begin{tabular}{c} $O(k)$ EPR pairs \& \\ $\Omega(k n^\frac{1}{3})$ com. in $\Q^{||}$ \end{tabular} \\ \hline 
         & \begin{tabular}{c} Parallel MSGs \\ (\cref{thm:main}) \end{tabular} & \begin{tabular}{c} total \\ relation \end{tabular} & $O(n)$ & \begin{tabular}{c} $2n$ EPR pairs \& \\ zero communication \end{tabular} & \begin{tabular}{c} $o(n)$ EPR pairs \& \\ $\Omega(n)$ com. in $\Q^2$ \end{tabular} \\ 
         \hline & \begin{tabular}{c} \cref{thm:function} \end{tabular} & \begin{tabular}{c} any \\ function \end{tabular} & $n$ & \begin{tabular}{c} $\infty$ EPR pairs \& \\ zero communication \end{tabular} & \begin{tabular}{c} zero EPR pairs \& \\ zero com. \end{tabular} \\
    \end{tabular}
    \caption{Trade-offs of entanglement and communication from our result and results in \cite{AG23}. For the $\oplus^k$-Forrelation problem, $k$ satisfies $o(n^\frac{1}{50})$. We abbreviate communication as ``com.''. The row of Type indicates corresponding problems are total/partial functions/relations. We denote by $\C^{||}$ classical SMP models, by $\C^{1}$ classical 1-way communication models, by $\C^{2}$ classical 2-way communication models, by $\Q^{||}$ quantum SMP models and by $\Q^{2}$ quantum 2-way communication models. In the classical and quantum SMP models, prior entanglement is allowed between two parties who have inputs. The second and third columns consider the same problem ($\oplus^k$-Boolean HM problem).}
    \label{tab:result2}
\end{table}

A non-local game which has a perfect quantum strategy plays an important role in the study of $\mathsf{MIP}^*$ \cite{cleve2004consequences,ji2021mip}. Such a non-local game that admits a perfect quantum strategy can be generalized and characterized in some ways \cite{cleve2014characterization,arkhipov2012extending}.

Jain and Kundu \cite{jain2021direct} proved a direct product theorem for the entanglement-assisted quantum communication complexity which we apply to prove the lower bound in \cref{thm:chsh}. The direct product theorem (Theorem 1 in \cite{jain2021direct}) is related to \cref{thm:main} but implies nothing if the quantum value of non-local games is 1 (i.e., there exists a perfect quantum strategy for non-local games). 

Recently, Braverman, Khot, and Minzer \cite{braverman2025parallel} proved the parallel repetition theorem for the GHZ game \cite{greenberger1989going} with exponential decay. Since there exists a perfect quantum strategy for the GHZ game, we can obtain a similar result for a 3-party communication protocol using the same proof strategy.

\section{Preliminaries}
We assume that the readers are familiar with the standard notation of quantum computing. We refer to \cite{NC10,Wat18,dW19} for standard references in quantum computing, to \cite{kushilevitz1996communication,rao2020communication} for references in communication complexity.
We also refer to \cite{BBT05,BCMdW10} for surveys of non-locality and quantum communication complexity, and to \cite{Raz10} for a survey of the parallel repetition theorem.

\subsection{Communication complexity}

Let us recall some basic definitions of quantum communication complexity.

In this paper, we consider a relation $R \subseteq X \times Y \times Z \times W$. In our setting, Alice receives an input $x \in X$ (unknown to Bob), and Bob receives an input $y \in Y$ (unknown to Alice). The (bounded error) communication complexity of $R$ is defined as the minimum cost of classical or quantum communication protocols to compute $z \in Z$ for Alice and $w \in W$ for Bob such that $(x,y,z,w) \in R$ with high probability, say $\frac{2}{3}$.

In the standard definition, Alice and Bob output the same value. In our definition, Alice and Bob may output different values. Our definition is a generalization of the standard setting, and one can recover the standard definition from our definition by requiring $z=w$ in the outputs to meet a relation $R$.

We denote by $\C_2^{pub}(R)$ the classical 2-way communication complexity with shared randomness for $R$, by $\Q_{2}^{pub}(R)$ the quantum 2-way communication complexity without shared entanglement and with public randomness for $R$ and by $\Q^{*}_2(R)$ the quantum 2-way communication complexity with shared entanglement for $R$.

\subsubsection*{Construction of non-local relations from communication protocols}

We can convert communication complexity problems to non-locality problems. See Section 4 in \cite{BCMdW10}. 

Consider a communication complexity problem $f$ whose quantum one-way communication complexity is $q$ and classical communication complexity is $c$. Suppose that it can be solved if Alice starts with some initial state $\ket{k}$ (the value of $k$ being known before the protocol to both Alice and Bob from the setting of communication complexity) as follows: she carries out a transformation $U_A(x)$ on this state, sends it to Bob who carries out a transformation $U_B(y)$ depends on his input $y$ and then measures in the computational basis. The probability of obtaining result $l$ is thus $|\bra{l} U_B(y) U_A(x)\ket{k}|^2$. From the knowledge of $l$, $k$, and $y$, Bob can obtain the value of the function $f(x,y)$.

Now let us consider the following process: Alice and Bob share a maximally entangled state $|\psi\rangle = 2^{-q/2}\sum_{i=0}^{2^q-1} \ket{i}\ket{i}$; Alice applies $U_A(x)^T$ (where $T$ is the transposition) and measures in the computational basis. Bob applies $U_B(y)$ and measures in the computational basis. Suppose that Alice obtains outcome $k$ and Bob obtains outcome $l$. The probability of finding these joint outcomes is $P(k,l|x,y)=|\bra{l}\bra{k}U_B(y)  U_A(x)^T \ket{\psi}|^2=2^{-q}|\bra{l} U_B(y) U_A(x)\ket{k}|^2$. This non-locality problem required at least $c$ bits of communication to solve by classical communication models by a reduction.

Let us recall the definitions of the non-local Deutsch-Jozsa problem and the non-local Hidden Matching problem, which can be constructed from the Distributed Deutsch-Jozsa problem \cite{BCW98} and the Hidden Matching problem \cite{BYJK08} respectively. Let us denote by $a\cdot b=\sum_i a_ib_i$ the inner product between bit strings $a$ and $b$, and by $a\oplus b$ the bitwise XOR of $a$ and $b$: the $i$th bit of $a \oplus b$ is $ a_i \oplus b_i$.

\begin{definition}[Non-local Deutsch-Jozsa problem \cite{BCT99}]
    For inputs $x,y \in \{0,1\}^n$ that satisfy the promise: either $x=y$, or $|x \oplus y|= n/2$. The task is to output $a \in \{0,1\}^{\log n}$ by Alice and $b\in \{0,1\}^{\log n}$ by Bob such that when $x=y$ then $a=b$, and when $|x \oplus y|= n/2$ then $a \neq b$.
\end{definition}

\begin{definition}[Non-local Hidden Matching problem, Section 6 in \cite{BBT05}]
    Assume that $n=2^m$, so we can index the numbers between 1 and $n$ with $m$-bit strings. Alice receives a string $x\in \{0,1\}^n$. Bob receives a perfect matching $M$ on $\{1,\ldots,n\}$ (i.e., a partition into $n/2$ disjoint pairs). Alice must give as output some $k\in \{0,1\}^m$. Bob must give as output a matching $(i,j)\in M$ and $l \in \{0,1\}^m$. Alice and Bob's output must satisfy $(i \oplus j) \cdot(k \oplus l) = x_i \oplus x_j$.
\end{definition}

\subsubsection*{Newman's theorem}

Newman \cite{New91} showed that any public-coin randomized communication protocol can be converted into a small private-coin randomized protocol by allowing for a small constant error. For a function $F:X \times Y \rightarrow \{0,1\}$ and $c>0$, let us denote by $\mathsf{C}_\gamma(F)$ the classical randomized communication complexity with private randomness and error $\gamma$, and let us denote by $\mathsf{C}_\gamma^{pub}(F)$ the classical randomized communication complexity with public unlimited randomness and error $\gamma$. 

\begin{fact}[\cite{New91}]
    Let $F: X \times Y \rightarrow \{0,1\}$ be a function. For every $\epsilon>0$ and $\delta>0$, $\mathsf{C}_{\epsilon+\delta} (F) \leq \mathsf{C}_\epsilon^{pub} (F) + O(\log (\log |X| + \log |Y|) + \log (\frac{1}{\delta}))$.
\end{fact}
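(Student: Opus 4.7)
The plan is a standard probabilistic (derandomization) argument. Start with a public-coin protocol $P$ achieving cost $c=\mathsf{C}_\epsilon^{pub}(F)$ and error at most $\epsilon$ on every input. The goal is to exhibit a short list $r_1,\dots,r_t$ of public-coin strings such that, uniformly for every input $(x,y)\in X\times Y$, at most an $(\epsilon+\delta)$-fraction of the $r_i$'s cause $P$ to err on $(x,y)$. Once such a list exists, Alice and Bob hard-code it into the protocol description at no communication cost; Alice then samples an index $i\in\{1,\dots,t\}$ uniformly using her private coins, sends $\lceil\log t\rceil$ bits naming $i$ to Bob, and both parties simulate $P$ with the hard-coded randomness $r_i$. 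The resulting private-coin protocol errs with probability at most $\epsilon+\delta$ on every input, at total cost $c+\lceil\log t\rceil$.

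To produce such a list, I would sample $r_1,\dots,r_t$ independently from the original public-coin distribution. For each fixed $(x,y)$, let $E_{x,y}(r)\in\{0,1\}$ indicate whether $P$ errs on $(x,y)$ using randomness $r$; by assumption its expectation is at most $\epsilon$. Hoeffding's inequality then yields
\[
\mathrm{Pr}\!\left[\tfrac{1}{t}\sum_{i=1}^t E_{x,y}(r_i) > \epsilon+\delta\right] \;\le\; e^{-2\delta^2 t}.
\]
A union bound over the $|X|\cdot|Y|$ inputs shows that choosing $t=\lceil(\ln(|X|\cdot|Y|)+1)/(2\delta^2)\rceil$ makes the combined failure probability strictly less than $1$, so a uniformly good list must exist. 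Taking logs,
\[
\lceil\log t\rceil \;=\; O\!\bigl(\log(\log|X|+\log|Y|)+\log(1/\delta)\bigr),
\]
which is exactly the claimed additive overhead.

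I do not anticipate any substantive obstacle: the whole argument is a one-shot Hoeffding-plus-union-bound calculation. The only minor technical caveat is that $P$'s public randomness may live on an arbitrary (possibly uncountable) probability space, but Hoeffding only requires boundedness and independence of the $E_{x,y}(r_i)$'s, so the estimate goes through verbatim. The construction of the list is nonconstructive, which is standard for Newman-style derandomization and does not affect the stated communication cost.
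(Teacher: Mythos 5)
Your proposal is correct and follows essentially the same route as the paper: sample $t$ public-coin strings, apply Hoeffding's bound per input, union bound over all $|X|\cdot|Y|$ inputs to show a uniformly good list exists, and then communicate the $\lceil\log t\rceil$-bit index (the paper phrases this as sending the chosen random bits, which has the same cost). No gaps.
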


To our knowledge, there has been no formal statement and proof of Newman's theorem in our setting, and thus let us show Newman's theorem holds for relation problems in our setting. It can be shown from Hoeffding's bound.

\begin{fact}[Hoeffding's bound \cite{Hoe63}]\label{fact:hoeffding}
    Let $x_1,\ldots,x_t$ be independent random variables such that $0 \leq x_i \leq 1$ for $i \in [t]$. Then,
    \[
        \mathrm{Pr}\left[ \left| \frac{1}{t} \left(\sum_{i=1}^t x_i - \mathbb{E} \left[ \sum_{i=1}^t x_i \right] \right) \right| \geq \delta \right] \leq 2e^{-2 \delta^2 t}.
    \]
\end{fact}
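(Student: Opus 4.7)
The plan is to prove the two-sided concentration inequality via the standard Chernoff/moment-generating-function method, followed by a union bound to combine the two one-sided tails. Set $S = \sum_{i=1}^t x_i$ and $\mu = \E{S}$. The first step is to reduce the probability statement to a question about the moment generating function: for any $\lambda > 0$, Markov's inequality applied to the monotone function $z \mapsto e^{\lambda z}$ gives
\[
    \Pr{S - \mu \geq \delta t} = \Pr{e^{\lambda (S - \mu)} \geq e^{\lambda \delta t}} \leq e^{-\lambda \delta t}\, \E{e^{\lambda(S-\mu)}}.
\]

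The second step is to exploit independence. Because the $x_i$ are independent, the MGF factorizes as
\[
    \E{e^{\lambda (S - \mu)}} = \prod_{i=1}^t \E{e^{\lambda(x_i - \E{x_i})}}.
\]
Each factor is then controlled by Hoeffding's lemma, which I would state and prove as an auxiliary claim: if $Y \in [0,1]$ is a random variable with mean $\nu$, then $\E{e^{\lambda(Y-\nu)}} \leq e^{\lambda^2/8}$. The proof of this lemma is the routine but essential convexity argument: write $Y = (1-Y)\cdot 0 + Y \cdot 1$, use convexity of $e^{\lambda y}$ to get $e^{\lambda(Y-\nu)} \leq (1-Y)e^{-\lambda \nu} + Y e^{\lambda(1-\nu)}$, take expectations, and then show via a second-derivative estimate on $\psi(\lambda) := \ln((1-\nu)e^{-\lambda \nu} + \nu e^{\lambda(1-\nu)})$ that $\psi(\lambda) \leq \lambda^2/8$. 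This Hoeffding-lemma step is the real technical content and the main obstacle; everything else is bookkeeping.

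Combining the two steps gives
\[
    \Pr{S - \mu \geq \delta t} \leq e^{-\lambda \delta t} \cdot e^{\lambda^2 t / 8}.
\]
Optimizing over $\lambda > 0$ by setting $\lambda = 4\delta$ yields the one-sided bound $\Pr{S - \mu \geq \delta t} \leq e^{-2\delta^2 t}$. The third step is symmetry: applying the same argument to the random variables $1 - x_i \in [0,1]$ (or equivalently to $-x_i$ after recentering) yields the matching lower tail $\Pr{\mu - S \geq \delta t} \leq e^{-2\delta^2 t}$. A union bound over the two tails gives the two-sided inequality $\Pr{|S - \mu| \geq \delta t} \leq 2 e^{-2\delta^2 t}$, and dividing inside the probability by $t$ recovers the statement as written. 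I do not foresee further difficulty; the proof is essentially self-contained once Hoeffding's lemma is in hand.
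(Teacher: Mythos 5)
Your proof is correct and is the standard textbook argument: the Chernoff/MGF bound, factorization by independence, Hoeffding's lemma (with the convexity and $\psi''\le 1/4$ estimate), optimization at $\lambda=4\delta$, and a union bound over the two tails. Note, however, that the paper itself does not prove this Fact at all — it simply cites Hoeffding's 1963 paper — so there is no in-paper proof to compare against; what you have written is the canonical proof one would find in Hoeffding's original paper or any standard probability text.
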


\begin{proposition}
    Let $R \subseteq X \times Y \times Z \times W$ be a relation. For every $\epsilon>0$ and $\delta>0$, $\mathsf{C}_{\epsilon+\delta} (R) \leq \mathsf{C}_\epsilon^{pub} (R) + O(\log (\log |X| + \log |Y|) + \log (\frac{1}{\delta}))$.
\end{proposition}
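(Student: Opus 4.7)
The plan is to mimic Newman's original argument, using Hoeffding's bound as indicated by the earlier Fact. Let $\Pi$ be a public-coin protocol for $R$ with communication cost $c = \mathsf{C}_\epsilon^{pub}(R)$ and error at most $\epsilon$ on every input. For each input $(x,y) \in X \times Y$, let $S(x,y;r) \in \{0,1\}$ be the indicator that $\Pi$ succeeds (that is, Alice and Bob output $(z,w)$ with $(x,y,z,w) \in R$) when the public randomness is $r$. By definition, $\mathbb{E}_r[S(x,y;r)] \geq 1 - \epsilon$ for every $(x,y)$. The goal is to replace the unbounded public string $r$ by a uniform choice from a small list $r_1, \ldots, r_t$, at the cost of a slightly larger error $\epsilon + \delta$.

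The first step is a probabilistic argument to show that such a small list exists. Draw $r_1, \ldots, r_t$ independently from the same distribution as $r$. For a fixed input $(x,y)$, the variables $S(x,y;r_1), \ldots, S(x,y;r_t)$ are i.i.d.\ in $[0,1]$ with mean at least $1 - \epsilon$, so Hoeffding's bound (\cref{fact:hoeffding}) gives
\[
    \Pr\!\left[ \frac{1}{t}\sum_{i=1}^t S(x,y;r_i) < 1 - \epsilon - \delta \right] \leq 2 e^{-2\delta^2 t}.
\]
Taking a union bound over all $|X|\cdot|Y|$ inputs, the probability that some input has empirical average below $1-\epsilon-\delta$ is at most $2 |X||Y| e^{-2\delta^2 t}$. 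Choosing $t = \Theta\bigl((\log|X| + \log|Y| + \log(1/\delta))/\delta^2\bigr)$ makes this less than $1$, so there exists a deterministic list $r_1, \ldots, r_t$ under which every input is solved correctly by at least a $(1-\epsilon-\delta)$-fraction of the strings.

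The second step is to turn this list into a private-coin protocol. Hardcode $r_1, \ldots, r_t$ into the protocol. Alice uses her private randomness to sample an index $i \in [t]$ uniformly, sends $i$ to Bob using $\lceil \log t \rceil$ bits, and then both parties run $\Pi$ pretending that $r_i$ is the public string. The error is at most $\epsilon + \delta$ for every input, and the total communication is $c + \lceil \log t \rceil = \mathsf{C}_\epsilon^{pub}(R) + O(\log(\log|X| + \log|Y|) + \log(1/\delta))$, as required.

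There is no real obstacle beyond checking that the indicator $S(x,y;r)$ is well-defined for relations; this is immediate because a transcript of $\Pi$ determines outputs $(z,w)$, and whether $(x,y,z,w) \in R$ is a deterministic predicate. The argument is otherwise identical to Newman's original proof for Boolean functions, so the main thing to verify carefully is simply that the union bound is taken over inputs rather than over input-output pairs (which is correct, since the protocol's success on a fixed input is already a single Bernoulli variable).
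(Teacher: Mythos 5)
Your proof is correct and follows essentially the same route as the paper: fix a public-coin protocol, use Hoeffding plus a union bound over $|X|\cdot|Y|$ inputs to argue (via the probabilistic method) that a small list of $t$ random strings suffices, hardcode the list, and have Alice transmit a uniformly random index with $O(\log t)$ extra bits. The only cosmetic differences are that the paper phrases things in terms of the failure probability $V(x,y,r) \in [0,1]$ (which also covers protocols that use private coins on top of the public string — your treatment of $S(x,y;r)$ as a $\{0,1\}$ indicator implicitly assumes the protocol is deterministic given the public coin, which is fine WLOG) and that the paper chooses the slightly smaller $t = O((\log|X| + \log|Y|)/\delta^2)$; both choices yield the same final bound $\log t = O(\log(\log|X| + \log|Y|) + \log(1/\delta))$.
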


\begin{proof}
    We will show that any classical two-party communication protocol $\mathcal{P}$ with unlimited shared random bits between the two parties can be transformed into another classical two-party communication protocol $\mathcal{P}'$ in which Alice and Bob share only $O(\log n + \log (\frac{1}{\delta}))$ random bits while increasing the error by only $\delta$. Since the amount of randomness is small, by making Alice send all the random bits to Bob, we have the desired protocol. Let $\Pi$ be the probabilistic distribution of the shared randomness of the protocol $\mathcal{P}$.

    Let $V(x, y, r)$ be a random variable which is defined as the probability that $\mathcal{P}$’s output $(z,w) \in Z \times W$ on input $(x, y) \in X \times Y$ and random string $r$ shared by Alice and Bob satisfy $(x, y, z, w) \notin R$. Because $\mathcal{P}$ computes $R$ with $\epsilon$ error, we have $\mathbb{E}_{r \in \Pi} [V(x,y,r)] \leq \epsilon$ for all $(x,y)$. We will build a new protocol that uses fewer random bits, using the probabilistic method. Let $t$ be a parameter, and let $r_1,\ldots,r_t$ be $t$ strings. For such strings, let us define a protocol $\mathcal{P}_{r_1,\ldots,r_t}$ as follows: Alice and Bob choose $1 \leq i \leq t$ uniformly at random and then proceed as in $\mathcal{P}$ with $r_i$ as their common random string. We now show that there exist strings $r_1,\ldots,r_t$ such that $\frac{1}{t} \sum_{i=1}^t [V(x,y,r_i)] \leq \epsilon + \delta$ for all $(x,y)$. For this choice of strings, the protocol $\mathcal{P}_{r_1,\ldots,r_t}$ is the desired protocol.

    To do so, we choose the $t$ values $r_1,\ldots,r_t$ by sampling the distribution $\Pi$ $t$ times. Consider a particular input pair $(x,y)$ and compute the probability that $\frac{1}{t} \sum_{i=1}^t V(x,y,r_i) > \epsilon + \delta$. By Hoeffding's bound (\cref{fact:hoeffding}), since $\mathbb{E}_{r \in \Pi}[V(x,y,r)] \leq \epsilon$, we get
    \[
        \mathrm{Pr} \left[ \left( \frac{1}{t}  \sum_{i=1}^t V(x,y,r_i) - \epsilon \right) > \delta \right] \leq 2e^{-2\delta^2 t}.
    \]
    By choosing $t=O\left(\frac{ \log |X| + \log |Y| }{\delta^2}\right)$, this is smaller than $|X|^{-1} |Y|^{-1}$. Thus, for a random choice of $r_1,\ldots,r_t$, the probability that for some input $(x,y)$, $\frac{1}{t} \sum_{i=1}^t [V(x,y,r_i)] > \epsilon + \delta$ is smaller than $|X||Y| |X|^{-1} |Y|^{-1} = 1$. This implies that there exists a choice of $r_1,\ldots,r_t$ where for every $(x,y)$ the error of the protocol $\mathcal{P}_{r_1,\ldots,r_t}$ is at most $\epsilon + \delta$. Finally, note that the number of random bits used by the protocol $\mathcal{P}_{r_1,\ldots,r_t}$ is $\log t = O(\log (\log |X| + \log |Y|) + \log (\frac{1}{\delta}))$.
\end{proof}

\subsection{Non-local game and parallel repetition theorem}

Let us recall some basic definitions of non-local games and parallel repetition theorems. A non-local game is played between two provers or players called Alice and Bob. The game consists of four finite sets $X, Y, A, B$, a probability distribution $P$ over $X \times Y$ and a predicate $V : X \times Y \times A \times B \rightarrow \{0, 1\}$. All parties know $X, Y, A, B, P, V$. Intuitively, $X$ is the set of possible questions for the first prover, $Y$ the set of possible questions for the second prover, $A$ the set of possible answers of the first prover, and $B$ the set of possible answers of the second prover. The distribution $P$ is used to generate questions for the two provers, and the predicate $V$ is used to accept or reject after the answers from both provers are obtained.

The game proceeds as follows. A pair of questions $(x, y) \in X \times Y$ is sampled from $P$, and sends $x$ to Alice and $y$ to Bob. Each party knows only the question addressed to him or her, and is not allowed to communicate with each other. Alice calculates some function and outputs $a = a(x) \in A$ and Bob calculates another function and outputs $b = b(y) \in B$. They win if $V (x, y, a, b) = 1$. The classical value of the game is the maximum probability of success that Alice and Bob can achieve, where the maximum is taken over all protocols $(a, b)$; that is, the classical value of the game is
\[
    \max_{a,b} \Exp_{(x,y)} [V (x, y, a(x), b(y))]
\]
where the expectation is taken with respect to the distribution $P$. Note that a setting where Alice and Bob share some randomness according to some distribution can be considered, and the classical value of the game is invariant in the setting. This is because, for a shared value $r$ chosen from a probabilistic distribution over $R$, there exist $r' \in R$ and functions $a', b'$ such that
\[
    \Exp_{x,y,r} [V (x, y, a(x,r), b(y,r))] \leq  \Exp_{x,y} [V (x, y, a(x,r'), b(y,r'))] = \Exp_{x,y} [V (x, y, a'(x), b'(y))].
\]
In other words, a randomized strategy can simply be viewed as some convex mixture of deterministic strategies, and the parties can select the optimal deterministic strategy, and the average winning probability obviously cannot be larger than the maximum winning probability over all deterministic strategies.

Roughly speaking, the parallel repetition of a non-local game $G$ is a game where the parties try to win simultaneously $n$ copies of $G$. The parallel repetition game is denoted by $G^{\otimes n}$. More precisely, in the game $G^{\otimes n}$, we consider questions $x = (x_1,\dots,x_n) \in X_n$, $y=(y_1,\dots,y_n) \in Y_n$, where each pair $(x_i, y_i) \in X \times Y$ is chosen independently according to the original distribution $P$. For outputs $a = (a_1,\dots,a_n) = a(x) \in A_n$ and $b = (b_1,\dots,b_n) = b(y) \in B_n$, they win if they win simultaneously on all $n$ coordinates, that is, if for every $i$, we have $V(x_i,y_i,a_i,b_i) = 1$.

It is known that the following parallel repetition theorem for classical values holds.

\begin{theorem}[Parallel repetition theorem \cite{Raz98,Hol09}]\label{thm:parallel_repetition}
    For any non-local game $G$ whose classical value is $1-\epsilon$ for $0<\epsilon\leq \frac{1}{2}$, the classical value of the game $G^{\otimes n}$ is at most 
    \[
        (1-\epsilon^c)^{\Omega(\frac{n}{s})},
    \]
    where $s=\log |A \times B| +1$ and $c$ is a universal constant.
\end{theorem}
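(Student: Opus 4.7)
The plan is to follow the information-theoretic strategy of Raz, as streamlined by Holenstein. Fix any deterministic strategy $(a,b)$ for $G^{\otimes n}$ and write $W_T := \bigwedge_{i \in T} W_i$ where $W_i$ is the event ``coordinate $i$ is won''; the value of $G^{\otimes n}$ equals $\mathrm{Pr}[W_{[n]}]$ maximized over $(a,b)$. The main loop is a greedy peeling: starting from $T = \emptyset$, one repeatedly identifies a coordinate $i \notin T$ for which $\mathrm{Pr}[W_i \mid W_T] \le 1 - \epsilon^c$, replaces $T$ by $T \cup \{i\}$, and iterates. Each such step multiplies the surviving probability by a factor at most $1 - \epsilon^c$, so if the loop runs $\Omega(n/s)$ times before stopping, one recovers the claimed bound $(1 - \epsilon^c)^{\Omega(n/s)}$.

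The heart of the argument is to show that a suitable $i$ always exists while $|T|$ is not too large. I aim to prove the contrapositive: if $\mathrm{Pr}[W_i \mid W_T] > 1 - \epsilon^c$ for every $i \notin T$, then one can extract a single-copy strategy for $G$ that wins with probability strictly greater than $1 - \epsilon$, contradicting the hypothesized classical value of $G$. The extraction uses \emph{correlated sampling}: on her true question $x$, Alice uses shared randomness to sample $x_{-i}$ from the conditional distribution of $X_{-i}$ given $X_i = x$ and $W_T$, and analogously Bob samples $y_{-i}$ from the conditional given $Y_i = y$ and $W_T$; the two then play the block strategy and output $(a_i, b_i)$. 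The crux is that although the marginals are computed independently, shared randomness lets them be coupled to a joint distribution whose total-variation distance from the true conditional on $(X_{-i}, Y_{-i}) \mid X_i, Y_i, W_T$ is controlled by
\[
\mathbb{E}_{i \notin T}\bigl[ I(X_{-i}; Y_i \mid X_i, W_T) + I(Y_{-i}; X_i \mid Y_i, W_T)\bigr] \le \frac{O(\log (1/\mathrm{Pr}[W_T]))}{n - |T|},
\]
which is a chain-rule bound exploiting that the unconditional question distribution is a product and that conditioning on an event of probability $p$ costs at most $\log(1/p)$ bits of entropy. Averaging over $i \notin T$ gives a coordinate on which the simulation error is $o(\epsilon^c)$, turning the assumed $1 - \epsilon^c$ conditional win probability into a $1 - \epsilon + \Omega(\epsilon^c)$ single-copy win probability, which is the contradiction.

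The factor $s = \log|A \times B| + 1$ enters because, to make the extracted strategy usable, one must condition not only on the events $W_j$ for $j \in T$ but also on the specific answer pair $(a_j, b_j)$ that the block protocol produced on those coordinates; otherwise the induced conditional on $(X_i, Y_i)$ need not resemble $P$. Each such answer conditioning consumes at most $\log|A \times B|$ additional bits of entropy, which is precisely the $s$ in the denominator. After $\Omega(n/s)$ iterations the information budget $\log(1/\mathrm{Pr}[W_T])$ is exhausted, forcing the stopping condition and producing the exponential decay.

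The main obstacle will be making the correlated-sampling lemma rigorous while balancing two opposing tensions: the distortion of the simulated conditional on $(X_{-i}, Y_{-i})$ must stay $o(\epsilon^c)$ at the same time as the induced marginal on $(X_i, Y_i)$ under $W_T$ remains within $\epsilon^c$ (in total variation) of the product distribution $P$. More conditioning tightens the latter but worsens the former, and the constant $c > 1$ appearing as the exponent of $\epsilon$ is exactly what is needed to reconcile these two requirements. This balancing is what separates a genuine parallel repetition theorem from the naive product bound $(1-\epsilon)^n$, which is known to fail in general for non-product games.
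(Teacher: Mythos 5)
The paper does not prove this theorem: it is cited verbatim from Raz~\cite{Raz98} and Holenstein~\cite{Hol09} and used as a black box. So there is no ``paper's own proof'' to compare against; what can be said is whether your sketch is a faithful account of the cited argument.

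In broad strokes it is. The iterative peeling of coordinates conditioned on $W_T$, the use of shared randomness to correlated-sample $(X_{-i},Y_{-i})$ given the true question on coordinate $i$, the Pinsker/chain-rule bound that converts $\log(1/\Pr[W_T])$ into a simulation-error budget averaged over $i\notin T$, and the observation that one must also condition on the \emph{answers} on $T$ (each costing at most $\log|A\times B|$ bits, which is where $s$ enters) --- all of these are genuinely the ingredients of Holenstein's streamlined proof, and you correctly explain why the trivial product bound $(1-\epsilon)^n$ does not simply apply.

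The one place your accounting does not line up with the real argument is the per-step threshold and the iteration count. In the actual proof the peeling threshold is $1-\Omega(\epsilon)$, not $1-\epsilon^c$: the extracted single-copy strategy must beat $1-\epsilon$, and the simulation error can only be kept below a constant fraction of $\epsilon$, so the coordinate you peel is one whose conditional win probability falls below, say, $1-\epsilon/2$. The information budget --- governed by the constraint that the simulation error $\approx\sqrt{\bigl(\log(1/\Pr[W_T]) + |T|\log|A\times B|\bigr)/(n-|T|)}$ stay $O(\epsilon)$ --- then permits only $\Theta(\epsilon^2 n/s)$ peeling rounds, and $(1-\epsilon/2)^{\Theta(\epsilon^2 n/s)} = 2^{-\Theta(\epsilon^3 n/s)}$ is finally repackaged cosmetically as $(1-\epsilon^c)^{\Omega(n/s)}$ with $c=3$. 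You instead put the full $\epsilon^c$ on the per-step threshold \emph{and} claim $\Omega(n/s)$ rounds; taken literally, the answer-conditioning cost $|T|\log|A\times B|\approx |T|\,s$ alone already caps the loop at $O(\epsilon^2 n/s)$ rounds, so your stated iteration count is not achievable at your stated threshold. This is reconcilable --- one ends up with a larger value of $c$ --- but the way it is written double-books the $\epsilon$-dependence and hides where the final exponent of $\epsilon$ actually comes from. Worth being explicit about, since the constant $c$ is exactly the quantity one would want to track when instantiating this theorem in \cref{lem:parallel_repetition_magic}.
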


A quantum strategy for a game $G = (P, X \times Y, A \times B, V)$ consists of an initial bipartite state $\ket{\psi} \in \mathcal{H}_A \otimes \mathcal{H}_B$ for finite-dimensional Hilbert spaces $\mathcal{H}_A$ and $\mathcal{H}_B$, a quantum measurement $\{M^{x}_{a}\}_{a \in A}$ for each $x \in X$ over $\mathcal{H}_A$, and a quantum measurement $\{M^{y}_{b}\}_{b \in B}$ each $y$ over $\mathcal{H}_B$. On a pair of inputs $(x,y)$ drawn from $P$, Alice performs her measurement corresponding to $x$ on her portion of $\ket{\psi}$, yielding an outcome $a$. Similarly, Bob performs his measurement corresponding to $y$ on his portion of $\ket{\psi}$, yielding outcome $b$. The results $a$ and $b$ are sent back to the referee, who evaluates if two answers and questions meet the predicate $V$. In summary, the quantum winning probability for the quantum strategy above is given by
\[
    \sum_{x,y} \pi(x,y) \sum_{a,b} \bra{\psi} M_a^x \otimes M_b^y \ket{\psi} V(x,y,a,b).
\]
The quantum value of a game $G$, denoted by $\omega_q(G)$, is the supremum of the winning probabilities over all the quantum strategies of Alice and Bob.

Jain and Kundu \cite{jain2021direct} proved the following direct product theorem for the entanglement-assisted interactive quantum communication complexity. 

\begin{theorem}[Part of Theorem 1 in \cite{jain2021direct}]\label{thm:dpt}
For any predicate $V$ on $(A\times B)\times(X\times Y)$ and any product probability distribution $P$ on the question set $X\times Y$, let $\mathcal{P}$ be an interactive entanglement-assisted 2-party quantum communication protocol for $G^{\otimes n}$ which has total communication $cn$. If $c < 1$, the success probability of $\mathcal{P}$ is at most
\[ \left(1-\frac{\nu}{2} + 2\sqrt{c}\right)^{\Omega\left(\frac{\nu^2n}{\log(|A|\cdot|B|)}\right)}\]
where $\nu=1-\omega_q (G)$.
\end{theorem}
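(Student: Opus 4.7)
The plan is to follow the information-theoretic approach to direct-product theorems, adapted to entanglement-assisted quantum communication. Fix a protocol $\mathcal{P}$ solving $G^{\otimes n}$ with total communication $cn$, and denote by $\sigma$ its success probability (winning on all $n$ coordinates). The argument is by contradiction: assuming $\sigma$ exceeds the claimed bound, we construct from $\mathcal{P}$ a zero-communication quantum strategy for a single copy of $G$ whose winning probability strictly exceeds $\omega_q(G) = 1-\nu$, which is impossible. Write the inputs as $X^n \sim P^{\otimes n}$ and $Y^n \sim P^{\otimes n}$ (the product form of $P$ is used crucially), let $T$ denote the transcript, and let $W_i$ be the indicator of winning coordinate $i$. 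The joint post-protocol quantum state $\rho$ carries Alice's and Bob's private memories together with $X^n, Y^n, T$, and the outputs.

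Following the classical Raz--Holenstein template lifted to the quantum setting (cf.\ the Jain--Pereszl\'enyi--Yao technique for quantum parallel repetition of public-coin games), pick a uniformly random subset $S\subseteq[n]$ of size $k=\Theta(\nu^2 n/\log(|A|\cdot|B|))$ and condition on the event $W_S=\bigcap_{i\in S}W_i$. A chain-rule bound on the total quantum mutual information accrued across coordinates limits it to $O(cn)$, so by averaging there is a typical coordinate $i \notin S$ on which the conditional joint state is close (in trace distance roughly $\sqrt{c}$ after Pinsker's inequality) to a state that a zero-communication strategy could prepare. Invoke the Jain--Radhakrishnan--Sen quantum substate theorem (or an embezzling-based correlated-sampling variant) to let Alice and Bob jointly approximate the conditional state on coordinate $i$ from a shared entangled resource alone. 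Playing the conditioned strategy on this simulated state yields a single-copy strategy for $G$ whose winning probability is at least $\sigma^{1/k}$ minus the extraction error, amounting to roughly $\sigma^{1/k} - 2\sqrt{c} - \nu/2$.

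Combining, any single-copy strategy wins $G$ with probability at most $\omega_q(G)=1-\nu$, so $\sigma^{1/k} - 2\sqrt{c} - \nu/2 \le 1-\nu$, i.e., $\sigma^{1/k} \le 1-\nu/2 + 2\sqrt{c}$, whence $\sigma \le (1-\nu/2+2\sqrt{c})^{\Omega(\nu^2 n/\log(|A|\cdot|B|))}$ as claimed. The main technical obstacle is the middle step: the transcript $T$ is entangled with Alice's and Bob's private registers, the conditioning event $W_S$ is non-local, and the shared-entanglement resource must be used to absorb both the information leaked by $T$ and the bias introduced by conditioning. The quantum substate theorem is precisely what converts a small relative-entropy/mutual-information bound between the ``ideal'' product state and the conditional state into a zero-communication simulation with only polynomial overhead; carefully tracking how the three error sources --- Pinsker's inequality applied to the transcript, the substate approximation, and the conditioning on $W_S$ --- accumulate is what forces the specific form of the bound and explains why the hypothesis $c < 1$ (sub-unit communication per coordinate) is required.
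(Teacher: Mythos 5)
First, a point of order: the paper does not prove this statement at all --- it is imported verbatim as ``Part of Theorem 1 in \cite{jain2021direct}'' and used as a black box (only in the proof of \cref{thm:chsh}). So there is no in-paper proof to compare against; what you have written is a sketch of the external Jain--Kundu argument. Your outline does capture the correct high-level architecture of that argument (condition on winning a random subset $S$ of coordinates, bound the information the transcript reveals by the communication, find a typical coordinate outside $S$ whose conditional state can be approximately prepared without communication, and derive a single-copy strategy beating $\omega_q(G)$). But as a proof it has genuine gaps, and the gaps sit exactly where the mathematical content of the theorem lives.

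Concretely: (1) ``conditioning on $W_S$'' is unproblematic classically but not quantumly --- the success event is defined by a joint measurement outcome, and the post-selected state is not something either party can locally produce; making this rigorous is the bulk of the Jain--Kundu analysis and you invoke it in one clause. (2) You appeal to ``an embezzling-based correlated-sampling variant,'' but quantum correlated sampling in the generality the classical Holenstein argument needs is known \emph{not} to exist; this is precisely why the theorem is restricted to product question distributions, and your sketch never shows where the product structure of $P$ is actually used, despite asserting it is ``used crucially.'' (3) The passage from ``conditional state within trace distance roughly $\sqrt{c}$'' to ``winning probability at least $\sigma^{1/k}-2\sqrt{c}-\nu/2$'' bundles three error sources of different origins into one line; in particular the $\nu/2$ term is not a generic loss but comes from a specific case split in the analysis, and without tracking it you cannot justify the exact base $1-\nu/2+2\sqrt{c}$ of the bound. (4) Your explanation of the hypothesis $c<1$ is not right as stated: the bound is vacuous unless $2\sqrt{c}<\nu/2$, and $c<1$ is simply the regime in which the cited theorem is phrased, not something ``forced'' by the error accounting you describe. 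In short, the skeleton is the right one, but the proposal is an annotated bibliography of the needed lemmas rather than a proof; each of the steps (1)--(3) would need to be carried out in full before the claimed inequality follows.
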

Note that Theorem 1 in \cite{jain2021direct} is a more general argument including the case of more than 2 parties and $c \geq 1$.

\subsubsection*{Magic square game}

Let us recall the definition of the magic square game \cite{Ara02}.

Alice is asked to give the entries of a row $x \in \{1, 2, 3\}$ and Bob is asked to give the entries of a column $y \in \{1, 2, 3\}$. The winning condition is that the parity of the row must be even, the parity of the column must be odd, and the intersection of the given row and column must agree.

Inputs are drawn uniformly from $\{1, 2, 3\} \times \{1, 2, 3\}$, and the classical winning probability is at most $8/9$. On the other hand, there is a perfect quantum strategy for the game due to Mermin \cite{Mer90,Mer93}.

\subsubsection*{CHSH game}

CHSH game \cite{clauser1969proposed} is a non-local game over $X=Y=A=B=\{0,1\}$. The provers win if and only if $a \oplus b = x \land y$. The quantum value is $\frac{2+\sqrt{2}}{4}$ \cite{cirel1980quantum}. If non-signaling correlation is allowed (the provers can do anything as long as causality of information is not violated, i.e., information between them cannot travel faster than the speed of light), there exists a perfect strategy \cite{popescu1994quantum}.

\section{Proofs}

From \cref{thm:parallel_repetition} and since the classical value of the magic square game is 8/9, we have the following statement.
\begin{lemma}\label{lem:parallel_repetition_magic}
    Let $G$ be the magic square game. There exists a constant $c>0$ such that the classical value of $G^{\otimes n}$ is at most $2^{-cn}$.
\end{lemma}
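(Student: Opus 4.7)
The plan is to apply the Raz--Holenstein parallel repetition theorem (\cref{thm:parallel_repetition}) directly to the magic square game. Since the magic square game has a \emph{perfect} quantum strategy (by Mermin's construction), the value whose decay is non-trivial here is the classical value, which is exactly the quantity bounded by \cref{thm:parallel_repetition}.

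First I would record the two input parameters needed to instantiate \cref{thm:parallel_repetition}. The classical value of the magic square game is $8/9$, so we set $\epsilon = 1/9$; this is the standard calculation showing that no deterministic assignment of row/column entries can simultaneously satisfy the row parity, column parity, and intersection consistency on all nine input pairs $(x,y) \in \{1,2,3\}^2$. Second, the answer alphabets are tiny: Alice's answer is a 3-bit string (entries of a row) and Bob's answer is a 3-bit string (entries of a column), so $|A|, |B| \leq 8$ and therefore $s = \log|A \times B| + 1 \leq 7$ is a constant independent of $n$.

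Plugging these into \cref{thm:parallel_repetition} gives
\[
    \omega(G^{\otimes n}) \leq \bigl(1 - (1/9)^{c_0}\bigr)^{\Omega(n/s)} = (1-\gamma)^{\Omega(n)}
\]
for the universal constant $c_0$ of \cref{thm:parallel_repetition} and a resulting absolute constant $\gamma \in (0,1)$. Since $(1-\gamma)^{\Omega(n)} \leq 2^{-cn}$ for some constant $c > 0$, the lemma follows.

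The whole argument is a bookkeeping exercise on top of a known parallel repetition theorem, so there is no real obstacle; the only thing to be careful about is to confirm that both $\epsilon$ and $s$ are absolute constants (independent of $n$), which is what allows the exponent $\Omega(n/s)$ to translate into genuine exponential decay in $n$.
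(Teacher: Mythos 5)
Your proof is correct and is exactly what the paper intends: the paper simply states that \cref{lem:parallel_repetition_magic} follows "from \cref{thm:parallel_repetition}" without spelling out the details, and your instantiation with $\epsilon = 1/9$ (since the classical value of the magic square game is $8/9$) and the constant answer-alphabet size $s = \log|A \times B| + 1 = O(1)$ is precisely the intended bookkeeping.
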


As a warming-up, we first show that $\Omega(n)$ classical communication is required for the parallel repetition of the magic square games. Similar statements are explicitly stated in Corollary 14 in \cite{jain2021direct} and Fact 4 in \cite{bharti2023power}.

\begin{theorem}\label{thm:classical_lower_bound}
    Let $G$ be the magic square game. Let $\mathcal{P}$ be a $2$-way classical communication protocol with public randomness for $G^{\otimes n}$. Suppose that the success probability of $\mathcal{P}$ is $2^{-o(n)}$. Then, the communication amount of $\mathcal{P}$ is $\Omega(n)$, and thus $\C_2^{pub}(G^{\otimes n}) = \Omega(n)$.
\end{theorem}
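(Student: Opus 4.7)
The plan is a standard transcript-guessing reduction that turns any 2-way classical protocol with $c$ bits of communication into a 0-communication (i.e., single-round, no-message) strategy with shared randomness whose success probability degrades by a factor of at most $2^c$; combined with \cref{lem:parallel_repetition_magic}, this forces $c=\Omega(n)$.

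Concretely, I would first describe the reduction. Given a public-coin $2$-way classical protocol $\mathcal{P}$ using exactly $c$ bits of communication on $G^{\otimes n}$, Alice and Bob augment their shared randomness by a uniformly random ``guess'' $g=g_1 g_2 \cdots g_c \in \{0,1\}^c$. Both parties then simulate $\mathcal{P}$ on their respective inputs as if the transcript were exactly $g$: at each of her speaking turns Alice computes her would-be message bit from her input, her share of the original shared randomness, and the prefix of $g$, but ignores whether it agrees with the corresponding bit of $g$ and just proceeds; similarly for Bob. They output whatever $\mathcal{P}$ would output on the guessed transcript. Since this uses only shared randomness and no communication, it is a valid 0-communication strategy for $G^{\otimes n}$.

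Next, I would analyze the success probability of the simulation. Fix any inputs $(\mathbf{x},\mathbf{y})$ and any outcome $r$ of the original shared randomness. Then the real transcript of $\mathcal{P}$ on $(\mathbf{x},\mathbf{y},r)$ is some deterministic string $t\in\{0,1\}^c$, and the event $g=t$ has probability exactly $2^{-c}$ over the guess. Conditioned on $g=t$, the simulation's outputs coincide with those of $\mathcal{P}$, so the success probability of the simulation is at least $2^{-c}\cdot p$, where $p$ is the success probability of $\mathcal{P}$.

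Finally, I would combine this with the parallel repetition bound. A $0$-communication protocol with shared randomness for $G^{\otimes n}$ is, up to convexity over the shared randomness, just a classical strategy for $G^{\otimes n}$, so by \cref{lem:parallel_repetition_magic} its success probability is at most $2^{-\alpha n}$ for a universal constant $\alpha>0$. Therefore $p\cdot 2^{-c}\leq 2^{-\alpha n}$, i.e., $c\geq \alpha n+\log_2 p$. Plugging in the hypothesis $p=2^{-o(n)}$, so that $-\log_2 p=o(n)$, gives $c\geq \alpha n-o(n)=\Omega(n)$, as required. The reduction itself is the only substantive step, and there is no real obstacle beyond writing it down carefully; the interesting mileage comes entirely from \cref{lem:parallel_repetition_magic}, which is used as a black box. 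The same template will be re-used later in the paper to handle the quantum lower bound in \cref{thm:main}, where the harder part (replacing qubit messages by teleportation plus a maximally mixed ``pretend-EPR'' state) will subsume the classical reasoning above.
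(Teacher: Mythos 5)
Your proposal is correct and takes essentially the same approach as the paper: both reduce the $c$-bit public-coin protocol to a zero-communication strategy by using extra shared randomness to guess the $c$-bit transcript, observe that the success probability drops by at most $2^{-c}$, and then apply \cref{lem:parallel_repetition_magic} (since a zero-communication protocol with shared randomness is just a classical strategy for $G^{\otimes n}$) to conclude $c=\Omega(n)$. The only cosmetic difference is that the paper has the parties explicitly verify the guessed transcript and abort on inconsistency, while you simply proceed on the guess and lower-bound the success probability by the event that the guess happens to equal the true transcript; both analyses yield the same factor-$2^{-c}$ loss and the same conclusion.
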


\begin{proof}
    Let $c'>0$ be a constant and $\mathcal{P}$ be a classical 2-way communication protocol with public randomness for $G^{\otimes n}$ whose communication amount is $c' n$. Let $p$ be the success probability of $\mathcal{P}$. 

    Let us consider a communication protocol $\mathcal{P}'$ in which the parties share the original public coin in the protocol $\mathcal{P}$ and uniformly random $c'n$ bits, and have no communication. In the protocol $\mathcal{P}'$, Alice and Bob check that the shared uniform randomness indeed meets the $c' n$ bits communication protocol, and if not, they abort (and output anything). Let $\not \perp_A$ be the event that Alice does not abort during the protocol $\mathcal{P}'$. Let $\not \perp_B$ be the event that Bob does not abort during the protocol $\mathcal{P}'$.

    Let $W_{\mathcal{P}'}$ be the event that $\mathcal{P}'$ outputs valid values (wins all the games) without abortions. Then, from the definition of $\mathcal{P}$, we have
    \[
        \mathrm{Pr} [ W_{\mathcal{P}'} | \not \perp_A, \not \perp_B] = p.
    \]
    Since the shared uniform randomness and the original communication protocol $\mathcal{P}$ are independent, we also have 
    \[
        \mathrm{Pr} [ \not \perp_A, \not \perp_B ] = 2^{-c' n},
    \]
    which implies 
    \[
        \mathrm{Pr} [ W_{\mathcal{P}'} ] = 2^{-c' n} p.
    \]
    Considering abortions, the success probability of $\mathcal{P}'$ is at least $2^{-c' n} p$.
    From \cref{lem:parallel_repetition_magic} and since shared randomness does not change the classical value of games, the success probability of $\mathcal{P}'$ is also at most $2^{-cn}$. Taking $c > c' >0$ and combining the upper and lower bound of the success probability, $p \leq 2^{-(c-c')n}$. Taking the contraposition, we have the claim.
\end{proof}

Next, we prove a lower bound with quantum communication.

\begin{theorem}\label{thm:lower_bound}
    Let $G$ be the magic square game. Let $\mathcal{P}$ be a $2$-way quantum communication protocol without prior entanglement and with public randomness for $G^{\otimes n}$. Suppose that the success probability of $\mathcal{P}$ is $2^{-o(n)}$. Then, the communication amount of $\mathcal{P}$ is $\Omega(n)$.
\end{theorem}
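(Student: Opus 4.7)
The plan is to mimic the proof of \cref{thm:classical_lower_bound} while additionally using quantum teleportation to strip away the quantum communication. Starting from a protocol $\mathcal{P}$ with $q$ qubits of $2$-way communication, no prior entanglement, public randomness, and success probability $p$, the goal is to construct a zero-communication, zero-entanglement protocol $\mathcal{P}'$ with shared randomness whose success probability is at least $p\cdot 2^{-O(q)}$. Since such a $\mathcal{P}'$ is equivalent to a classical shared-randomness strategy for $G^{\otimes n}$, \cref{lem:parallel_repetition_magic} bounds its success probability by $2^{-cn}$, and rearranging $p\cdot 2^{-O(q)}\leq 2^{-cn}$ gives $q=\Omega(n)$ whenever $p=2^{-o(n)}$.

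The reduction proceeds in two stages. In the first stage, I would use quantum teleportation to convert each qubit of quantum communication into one shared EPR pair and $2$ bits of classical communication, producing an intermediate protocol $\mathcal{P}_1$ with $q$ shared EPR pairs, $2q$ bits of classical $2$-way communication, and success probability still $p$. Then the Newman-style step from the proof of \cref{thm:classical_lower_bound} can be applied verbatim: the parties share an extra $2q$ uniformly random bits that they pretend are the teleportation outcomes; each party performs its Bell measurements locally and aborts when its real outcomes disagree with the assigned part of the shared string, while using the shared string to apply the other party's ``sent'' teleportation corrections. Because teleportation outcomes are uniform and independent of the teleported state, the probability that no party aborts is exactly $2^{-2q}$, and conditioned on non-abort the resulting protocol $\mathcal{P}_2$ behaves identically to $\mathcal{P}_1$.

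The second stage removes the $q$ shared EPR pairs. Alice and Bob each independently prepare their side of the supposed EPR pairs in the maximally mixed state, which they can do without any communication or entanglement (e.g.\ by each sampling uniform classical bits and producing the corresponding computational basis state). The joint state is then $I/4^q$, which admits the decomposition
\[
    \frac{I}{4^q} \;=\; \frac{1}{4^q}\sum_{P\in\{I,X,Y,Z\}^q} (I\otimes P)\,\ket{\Phi^+}^{\otimes q}\!\bra{\Phi^+}^{\otimes q}\,(I\otimes P^\dagger),
\]
so with probability $4^{-q}$ the sampled state is exactly $\ket{\Phi^+}^{\otimes q}$, in which case $\mathcal{P}_2$ proceeds as designed. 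Composing the two stages yields a zero-communication, zero-entanglement protocol $\mathcal{P}'$ (with shared randomness) whose success probability is at least $p\cdot 2^{-2q}\cdot 4^{-q} = p\cdot 2^{-4q}$.

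The main delicate point is verifying that the two simulation steps compose cleanly --- specifically, that conditioned on both non-abort in the Newman step \emph{and} the Pauli in the mixed-state decomposition being the identity, the joint output of $\mathcal{P}'$ has exactly the same distribution as that of $\mathcal{P}_1$ (and hence of $\mathcal{P}$). Once this is checked, the conclusion is immediate: \cref{lem:parallel_repetition_magic} applied to $\mathcal{P}'$ gives $p\cdot 2^{-4q}\leq 2^{-cn}$, so $p\leq 2^{4q-cn}$, and $p=2^{-o(n)}$ forces $q=\Omega(n)$.
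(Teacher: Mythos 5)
Your proposal is correct and follows essentially the same approach as the paper: teleport to replace each qubit with one EPR pair plus two classical bits, use the Newman-style abort trick to absorb the classical bits into shared randomness at a $2^{-2q}$ cost, and replace the shared EPR pairs by a maximally mixed state at a further $2^{-2q}$ cost before invoking \cref{lem:parallel_repetition_magic}. The only cosmetic difference is the decomposition of $I/4^q$: you use the full Bell-basis (Pauli-twirl) mixture, whereas the paper writes $I/2^{2c'n}$ as $2^{-2c'n}\,(\ket{\mathsf{EPR}}\bra{\mathsf{EPR}})^{\otimes c'n}$ plus a normalized orthogonal remainder --- both yield the same $2^{-2q}$ weight on the desired EPR state, and the delicate point you flag is handled in both cases simply by linearity of the success probability in the shared state.
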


\begin{proof}
Let the amount of communication of $\mathcal{P}$ be $c' n$ qubits for a constant $c' = \frac{c}{5}>0$ where $c$ is the constant in \cref{lem:parallel_repetition_magic}, and let the success probability of $\mathcal{P}$ be $p$. By quantum teleportation \cite{BBC+93}, we can replace a $1$-qubit message with a $2$-bit message assuming that an EPR pair is shared by the parties. By allowing shared $c' n$ EPR pairs and considering quantum teleportation, we can construct a classical communication protocol $\mathcal{P}'$ where parties share the public coin in $\mathcal{P}$ and additional $c' n$ EPR pairs, whose communication amount is $2 c' n$ bits and success probability is $p$.

Next, we construct a communication protocol $\mathcal{P}''$ in which the parties share the public coin, $c' n$ EPR pairs and $2c' n$ uniform randomness without communication, and the success probability is at least $p \cdot 2^{-2c'n}$. This is done by using the shared uniform randomness, Alice and Bob check that the randomness indeed meets the $2 c' n$ bits communication protocol, and if not, they abort and output anything (as the same way in the proof of \cref{thm:classical_lower_bound}).

Let us replace $c' n$ EPR pairs by a maximally mixed state. Let us denote
\[
    \ket{\mathsf{EPR}} = \frac{1}{\sqrt{2}}(\ket{00} + \ket{11}).
\]
A maximally mixed state over Alice and Bob 
\[
    \frac{1}{2^{2c'n}} I = \frac{1}{2^{c'n}} I \otimes \frac{1}{2^{c'n}} I
\]
is a separable state between Alice and Bob, and thus it can be produced by Alice and Bob without communication. Let us denote
\[
    \rho = \frac{1}{2^{2c'n}-1} \sum_i \ket{\psi_i} \bra{\psi_i}
\]
such that $\ket{\mathsf{EPR}}^{\otimes c' n}$ and $\{ \ket{\psi_i} \}_i$ form an orthogonal basis of the $2^{2c'n}$-dimensional Hilbert space.
Since 
\[
    \frac{1}{2^{2c'n}} I = \frac{1}{2^{2c'n}} \left( \ket{\mathsf{EPR}} \bra{\mathsf{EPR}} \right)^{\otimes c' n} + \left(1-\frac{1}{2^{2c'n}} \right) \rho,
\]
we have a communication protocol $\mathcal{P}'''$ with shared randomness, without shared entanglement and communication, and the success probability is at least $p \cdot 2^{- (2+2) c' n}$.

From \cref{lem:parallel_repetition_magic} and shared randomness does not change the classical values, we have $2^{-cn} \geq p \cdot 2^{-4 c' n}$, which implies $p \leq 2^{- (c - 4 c')\cdot n}$. Since $c - 4 c' = \frac{1}{5} c >0$, the success probability of $\mathcal{P}$ is $2^{-\Omega(n)}$. Taking the contraposition, we have the claim.

\end{proof}

Let us prove our main theorem:

\restateMainThm*

\begin{proof}[Proof of \cref{thm:main}]
    Let $G$ be the magic square game and we will regard $G^{\otimes n}$ as a relation. Since there exists a quantum perfect strategy for the magic square game, $\Q^*(G^{\otimes n})=0$. The input size of $G^{\otimes n}$ is $O(n)$, and from \cref{thm:lower_bound}, we have $\Q_2^{pub}(G^{\otimes n})=\Theta(n)$.
\end{proof}

As a complement result, we show that such the largest separation does not hold for a function rather than a relation.

\restateCompThm*

\begin{proof}[Proof of \cref{thm:function}]
    Since $Q^*(F)=0$, there exists an entanglement-assisted communication protocol $\mathcal{P}$ whose communication amount is 0 to compute $F$ with high probability. Without loss of generality, in the protocol $\mathcal{P}$, we assume that Alice applies a unitary operation $U_A$ on her input $x$ encoded in the computational basis and her ancillary system $\ket{\mathrm{ancilla}}_A$ and a part of shared entanglement possessed by her, and she measures a part of her quantum registers and outputs the measurement result. We also assume that Bob applies a unitary operation $U_B$ on his input $y$ in the computational basis and his ancillary system $\ket{\mathrm{ancilla}}_B$ and a part of shared entanglement possessed by him, and he measures a part of his quantum registers outputs the measurement result. As a quantum circuit, we can describe the protocol as in \cref{fig:circuit}. Given the protocol $\mathcal{P}$, the marginal distribution of Alice's measurement results depend only on $x$ and the marginal distribution of Bob's measurement results depend only on $y$.
    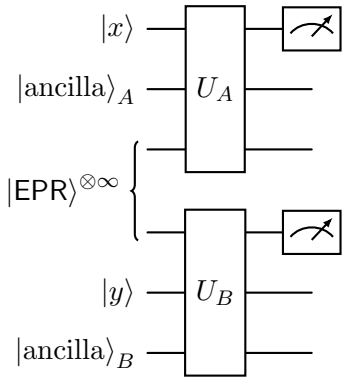
\begin{figure}[H]
        \centering
        \begin{quantikz}
            \lstick{$\ket{x}$} & \gate[3]{U_A} & \meter{}  \\
            \lstick{$\ket{\mathrm{ancilla}}_A$} & &  \\
            \lstick[2]{$\ket{\mathsf{EPR}}^{\otimes \infty}$} & & \\
            & \gate[3]{U_B} & \meter{}\\
            \lstick{$\ket{y}$}& & \\
            \lstick{$\ket{\mathrm{ancilla}}_B$} & &
        \end{quantikz}
        \caption{Circuit description of the entanglement-assisted protocol $\mathcal{P}$ to compute $F$.}	
        \label{fig:circuit}
    \end{figure}
    Let us construct a new communication protocol $\mathcal{P}'$ without shared entanglement and communication from the protocol $\mathcal{P}$. Not to confuse the readers, let us call the two parties in $\mathcal{P}'$ Carol and Dave who receive inputs $x$ and $y$ respectively. Carol simulates the marginal distribution of Alice's measurement results. Note that since the results only depend on $x$ and the maximally mixed state which is a reduced state on a part of entanglement Alice possesses, this simulation can be done exactly by Carol without communication. Then, Carol outputs a measurement result whose probability is the highest among all measurement results. Note that this result is uniquely determined because the original protocol $\mathcal{P}$ computes the function $F$ with high probability. Similarly, Dave simulates the marginal distribution of the Bob's measurement results and outputs a measurement result whose probability is the highest among all measurement results. This protocol computes $F$ with probability $1$. Therefore, we have $\Q_2(F)=0$, which concludes the proof.  
\end{proof}

Using \cref{thm:dpt}, we finally show the maximum separation of quantum communication complexity with non-signaling correlation and shared entanglement.

\restateCHSHThm*

\begin{proof}[Proof of \cref{thm:chsh}]
    For the CHSH game $G$, let us consider the communication complexity of $G^{\otimes n}$. If non-signaling correlation is allowed, there exists a perfect strategy, and thus $\Q^{\mathcal{NS}}(G^{\otimes n}) = 0$. Moreover, since the question distribution of CHSH game is a product distribution and its quantum value is $\frac{2+\sqrt{2}}{4}$, we can apply \cref{thm:dpt} and have $\Q^*_2(G^{\otimes n}) = \Theta(n)$.
\end{proof}

\section*{Acknowledgments}
The authors are grateful to Rahul Jain for useful discussions and clarifying implications from \cite{jain2021direct}. AH also thanks Ryuhei Mori, Harumichi Nishimura and Daiki Suruga for helpful conversations, Takashi Yamakawa for spotting the proof of Theorem 5 in \cite{laplante2012classical} as a related technique.

AH was supported by JSPS KAKENHI grant No.~24H00071, 25K24674. FLG was supported by JSPS KAKENHI grants Nos.~JP20H05966, 20H00579, 24H00071, 25K24674, MEXT Q-LEAP grant No.~JPMXS0120319794 and JST CREST grant No.~JPMJCR24I4. 
Most of this work was done while AM was affiliated with Aalto University. AM was
also partially supported by the Helsinki Institute for Information Technology
(HIIT) and the Research Council of Finland, Grant 359104.

\bibliographystyle{alpha}
\bibliography{ref}

@article{BCT99,
  title={Cost of exactly simulating quantum entanglement with classical communication},
  author={Brassard, Gilles and Cleve, Richard and Tapp, Alain},
  journal={Physical Review Letters},
  volume={83},
  number={9},
  pages={1874},
  year={1999},
  publisher={APS}
}

@article{BYJK08,
  title={Exponential separation of quantum and classical one-way communication complexity},
  author={Bar-Yossef, Ziv and Jayram, Thathachar S and Kerenidis, Iordanis},
  journal={SIAM Journal on Computing},
  volume={38},
  number={1},
  pages={366--384},
  year={2008},
  publisher={SIAM}
}

@article{BBT05,
  title={Quantum pseudo-telepathy},
  author={Brassard, Gilles and Broadbent, Anne and Tapp, Alain},
  journal={Foundations of Physics},
  volume={35},
  number={11},
  pages={1877--1907},
  year={2005},
  publisher={Springer}
}

@article{RGT22,
  title={Quantum versus randomized communication complexity, with efficient players},
  author={Girish, Uma and Raz, Ran and Tal, Avishay},
  journal={computational complexity},
  volume={31},
  number={2},
  pages={17},
  year={2022},
  publisher={Springer}
}

@book{NC10, 
    place={Cambridge}, 
    title={Quantum Computation and Quantum Information: 10th Anniversary Edition},
    DOI={10.1017/CBO9780511976667},
    publisher={Cambridge University Press}, 
    author={Nielsen, Michael A. and Chuang, Isaac L.}, 
    year={2010}
}

@book{Wat18,
    place={Cambridge},
    title={The Theory of Quantum Information},
    DOI={10.1017/9781316848142},
    publisher={Cambridge University Press}, 
    author={Watrous, John}, 
    year={2018}
}

@article{dW19,
  title={Quantum computing: Lecture notes},
  author={de Wolf, Ronald},
  journal={arXiv preprint arXiv:1907.09415},
  year={2019}
}

@article{BCMdW10,
  title={Nonlocality and communication complexity},
  author={Buhrman, Harry and Cleve, Richard and Massar, Serge and de Wolf, Ronald},
  journal={Reviews of Modern Physics},
  volume={82},
  number={1},
  pages={665},
  year={2010},
  publisher={APS}
}

@inproceedings{Raz10,
  title={Parallel repetition of two prover games (invited survey)},
  author={Raz, Ran},
  booktitle={2010 IEEE 25th Annual Conference on Computational Complexity (CCC 2010)},
  pages={3--6},
  year={2010},
  organization={IEEE}
}

@article{Raz98,
    author = {Raz, Ran},
    title = {A Parallel Repetition Theorem},
    journal = {SIAM Journal on Computing},
    volume = {27},
    number = {3},
    pages = {763-803},
    year = {1998}
}

@article{Hol09,
 author = {Holenstein, Thomas},
 title = {Parallel Repetition: Simplification and the No-Signaling Case},
 year = {2009},
 pages = {141--172},
 publisher = {Theory of Computing},
 journal = {Theory of Computing},
 volume = {5},
 number = {8},
 URL = {https://theoryofcomputing.org/articles/v005a008},
}

@article{Ara02,
  title={A simple demonstration of {Bell}'s theorem involving two observers and no probabilities or inequalities},
  author={Aravind, PK},
  journal={arXiv preprint quant-ph/0206070},
  year={2002}
}

@article{Mer90,
  title={Simple unified form for the major no-hidden-variables theorems},
  author={Mermin, N David},
  journal={Physical Review Letters},
  volume={65},
  number={27},
  pages={3373},
  year={1990},
  publisher={APS}
}

@article{Mer93,
  title={Hidden variables and the two theorems of John Bell},
  author={Mermin, N David},
  journal={Reviews of Modern Physics},
  volume={65},
  number={3},
  pages={803},
  year={1993},
  publisher={APS}
}

@article{BBC+93,
  title={Teleporting an unknown quantum state via dual classical and Einstein-Podolsky-Rosen channels},
  author={Bennett, Charles H and Brassard, Gilles and Cr{\'e}peau, Claude and Jozsa, Richard and Peres, Asher and Wootters, William K},
  journal={Physical Review Letters},
  volume={70},
  number={13},
  pages={1895},
  year={1993},
  publisher={APS}
}

@article{Gav06,
  title={On the role of shared entanglement},
  author={Gavinsky, Dmytro},
  journal={arXiv preprint quant-ph/0604052},
  year={2006}
}

@article{Gav21,
  title={Bare Quantum Simultaneity Versus Classical Interactivity in Communication Complexity},
  author={Gavinsky, Dmitry},
  journal={IEEE Transactions on Information Theory},
  volume={67},
  number={10},
  pages={6583--6605},
  year={2021},
  publisher={IEEE}
}

@article{GGJL24,
  title={Quantum Communication Advantage in {TFNP}},
  author={G{\"o}{\"o}s, Mika and Gur, Tom and Jain, Siddhartha and Li, Jiawei},
  journal={arXiv preprint arXiv:2411.03296},
  year={2024}
}

@inproceedings{BCW98,
  title={Quantum vs. classical communication and computation},
  author={Buhrman, Harry and Cleve, Richard and Wigderson, Avi},
  booktitle={Proceedings of the 30th annual ACM symposium on Theory of computing (STOC 1998)},
  pages={63--68},
  year={1998}
}

@article{New91,
  title={Private vs. common random bits in communication complexity},
  author={Newman, Ilan},
  journal={Information Processing Letters},
  volume={39},
  number={2},
  pages={67--71},
  year={1991},
  publisher={Citeseer}
}

@InProceedings{AG23,
  author =	{Arunachalam, Srinivasan and Girish, Uma},
  title =	{{Trade-Offs Between Entanglement and Communication}},
  booktitle =	{38th Computational Complexity Conference (CCC 2023)},
  pages =	{25:1--25:23},
  series =	{LIPIcs},
  ISBN =	{978-3-95977-282-2},
  ISSN =	{1868-8969},
  year =	{2023},
  volume =	{264},
  URL =		{https://drops.dagstuhl.de/entities/document/10.4230/LIPIcs.CCC.2023.25},
  URN =		{urn:nbn:de:0030-drops-182957},
  doi =		{10.4230/LIPIcs.CCC.2023.25},
  annote =	{Keywords: quantum, communication complexity, exponential separation, boolean hidden matching, forrelation, xor lemma}
}

@inproceedings{cleve2004consequences,
  title={Consequences and limits of nonlocal strategies},
  author={Cleve, Richard and Hoyer, Peter and Toner, Benjamin and Watrous, John},
  booktitle={Proceedings. 19th IEEE Annual Conference on Computational Complexity (CCC 2004)},
  pages={236--249},
  year={2004},
  organization={IEEE}
}

@article{ji2021mip,
  title={{MIP*= RE}},
  author={Ji, Zhengfeng and Natarajan, Anand and Vidick, Thomas and Wright, John and Yuen, Henry},
  journal={Communications of the ACM},
  volume={64},
  number={11},
  pages={131--138},
  year={2021},
  publisher={ACM New York, NY, USA}
}

@inproceedings{cleve2014characterization,
  title={Characterization of binary constraint system games},
  author={Cleve, Richard and Mittal, Rajat},
  booktitle={Automata, Languages, and Programming: 41st International Colloquium (ICALP 2014)},
  pages={320--331},
  year={2014},
  organization={Springer}
}

@article{arkhipov2012extending,
  title={Extending and characterizing quantum magic games},
  author={Arkhipov, Alex},
  journal={arXiv preprint arXiv:1209.3819},
  year={2012}
}

@inproceedings{jain2008direct,
  title={Direct product theorems for classical communication complexity via subdistribution bounds},
  author={Jain, Rahul and Klauck, Hartmut and Nayak, Ashwin},
  booktitle={Proceedings of the 40th annual ACM symposium on Theory of computing (STOC 2008)},
  pages={599--608},
  year={2008}
}

@article{braverman2025parallel,
  title={Parallel repetition for the {GHZ} game: Exponential decay},
  author={Braverman, Mark and Khot, Subhash and Minzer, Dor},
  journal={SIAM Journal on Computing},
  volume = {0},
  number={0},
  pages={FOCS23--33},
  year={2025},
  publisher={SIAM}
}

@incollection{greenberger1989going,
  title={Going beyond {Bell’s} theorem},
  author={Greenberger, Daniel M and Horne, Michael A and Zeilinger, Anton},
  booktitle={Bell’s theorem, quantum theory and conceptions of the universe},
  pages={69--72},
  year={1989},
  publisher={Springer}
}

@inproceedings{shi2005tensor,
  title={Tensor norms and the classical communication complexity of nonlocal quantum measurement},
  author={Shi, Yaoyun},
  booktitle={Proceedings of the 37th annual ACM symposium on Theory of computing (STOC 2005)},
  pages={460--467},
  year={2005}
}

@article{jain2008optimal,
  title={Optimal direct sum and privacy trade-off results for quantum and classical communication complexity},
  author={Jain, Rahul and Sen, Pranab and Radhakrishnan, Jaikumar},
  journal={arXiv preprint arXiv:0807.1267},
  year={2008}
}

@article{Hoe63,
  title={Probability inequalities for sums of bounded random variables},
  author={Hoeffding, Wassily},
  journal={Journal of the American Statistical Association},
  volume={58},
  number={301},
  pages={13--30},
  doi={10.1080/01621459.1963.10500830},
  year={1963}
}

@article{gavinsky2009classical,
  title={Classical interaction cannot replace quantum nonlocality},
  author={Gavinsky, Dmytro},
  journal={arXiv preprint arXiv:0901.0956},
  year={2009}
}

@article{bharti2023power,
  title={On the power of geometrically-local classical and quantum circuits},
  author={Bharti, Kishor and Jain, Rahul},
  journal={arXiv preprint arXiv:2310.01540},
  year={2023}
}

@inproceedings{gavinsky2006bounded,
  title={Bounded-error quantum state identification and exponential separations in communication complexity},
  author={Gavinsky, Dmitry and Kempe, Julia and Regev, Oded and de Wolf, Ronald},
  booktitle={Proceedings of the 38th annual ACM symposium on Theory of Computing (STOC 2006)},
  pages={594--603},
  year={2006}
}

@article{bell1964einstein,
  title={On the {Einstein} {Podolsky} {Rosen} paradox},
  author={Bell, John S},
  journal={Physics Physique Fizika},
  volume={1},
  number={3},
  pages={195},
  year={1964},
  publisher={APS}
}

@inproceedings{yao1993quantum,
  title={Quantum circuit complexity},
  author={Yao, Andrew Chi-Chih},
  booktitle={Proceedings of 1993 IEEE 34th Annual Foundations of Computer Science (FOCS 1993)},
  pages={352--361},
  year={1993},
  organization={IEEE}
}

@inproceedings{yao1979some,
  title={Some complexity questions related to distributive computing (preliminary report)},
  author={Yao, Andrew Chi-Chih},
  booktitle={Proceedings of the eleventh annual ACM symposium on Theory of computing (STOC 1979)},
  pages={209--213},
  year={1979}
}

@article{clauser1969proposed,
  title={Proposed experiment to test local hidden-variable theories},
  author={Clauser, John F and Horne, Michael A and Shimony, Abner and Holt, Richard A},
  journal={Physical Review Letters},
  volume={23},
  number={15},
  pages={880},
  year={1969},
  publisher={APS}
}

@article{cleve1997substituting,
  title={Substituting quantum entanglement for communication},
  author={Cleve, Richard and Buhrman, Harry},
  journal={Physical Review A},
  volume={56},
  number={2},
  pages={1201},
  year={1997},
  publisher={APS}
}

@article{jain2021direct,
  title={A direct product theorem for quantum communication complexity with applications to device-independent cryptography},
  author={Jain, Rahul and Kundu, Srijita},
  journal={arXiv preprint arXiv:2106.04299},
  year={2021}
}

@article{popescu1994quantum,
  title={Quantum nonlocality as an axiom},
  author={Popescu, Sandu and Rohrlich, Daniel},
  journal={Foundations of Physics},
  volume={24},
  number={3},
  pages={379--385},
  year={1994},
  publisher={Springer}
}

@article{cirel1980quantum,
  title={Quantum generalizations of Bell's inequality},
  author={Cirel'son, Boris S},
  journal={Letters in Mathematical Physics},
  volume={4},
  pages={93--100},
  year={1980},
  publisher={Springer}
}

@inproceedings{laplante2012classical,
  title={Classical and quantum partition bound and detector inefficiency},
  author={Laplante, Sophie and Lerays, Virginie and Roland, J{\'e}r{\'e}mie},
  booktitle={International Colloquium on Automata, Languages, and Programming (ICALP 2012)},
  pages={617--628},
  year={2012},
  organization={Springer}
}

@book{kushilevitz1996communication,
 place={Cambridge},
 title={Communication Complexity},
 DOI={10.1017/CBO9780511574948},
 publisher={Cambridge University Press},
 author={Kushilevitz, Eyal and Nisan, Noam},
 year={1996}
}

@book{rao2020communication,
 place={Cambridge},
 title={Communication Complexity: and Applications}, DOI={10.1017/9781108671644},
 publisher={Cambridge University Press},
 author={Rao, Anup and Yehudayoff, Amir},
 year={2020}
}

\end{document}